\DeclareMathOperator{\dtw}{dtw}
\newtheorem{theorem}{Theorem}[section]
\newtheorem{observation}[theorem]{Observation}
\newtheorem{corollary}[theorem]{Corollary}
\newtheorem{definition}{Definition}
\newtheorem{lemma}[theorem]{Lemma}
\title{Fast Exact Dynamic Time Warping on Run-Length Encoded\\Time Series}
\author[1]{Vincent Froese}
\author[2]{Brijnesh Jain\footnote{Supported by the DFG project JA~2109/4-2.}}
\author[1]{Maciej Rymar\footnote{Supported by the DFG project TORE (NI~369/18-1).}}
\author[3]{Mathias Weller}
\affil[1]{\small
  Technische Universit\"at Berlin, Faculty~IV, Institute of Software Engineering and Theoretical Computer Science, Algorithmics and Computational Complexity.\protect\\
  vincent.froese@tu-berlin.de}
\affil[2]{\small Technische Universit\"at Berlin, Faculty~IV, Distributed Artificial Intelligence Laboratory.\protect\\
brijnesh.jain@dai-labor.de}
\affil[3]{\small CNRS, LIGM, Universit\'e Paris Est, Marne-La-Vall\'ee.\protect\\ mathias.weller@u-pem.fr}
\date{\today}
\begin{document}

\maketitle

\begin{abstract} 
Dynamic Time Warping (DTW) is a well-known similarity measure for time series. The standard dynamic programming approach to compute the DTW distance of two length-$n$ time series, however, requires~$O(n^2)$ time, which is often too slow for real-world applications. Therefore, many heuristics have been proposed to speed up the DTW computation.
These are often based on lower bounding techniques, approximating the DTW distance, or considering special input data such as binary or piecewise constant time series.

In this paper, we present a first exact algorithm to compute the DTW distance of two run-length encoded time series whose running time only depends on the encoding lengths of the inputs.
The worst-case running time is cubic in the encoding length.
In experiments we show that our algorithm is indeed fast for time series with short encoding lengths.

\medskip

\noindent\textbf{Keywords:} Time Series Similarity, Dynamic Programming, Block Matrices, Sparse Data

\end{abstract}

\section{Introduction}
Time series data is ubiquitous appearing in essentially all scientific domains. Comparing time series requires a measure to determine the similarity of two time series. Dynamic Time Warping (DTW)~\cite{SC78} is an established method which is used in numerous time series mining applications~\cite{WMDTSK13,ASW15,BLBLK17,AML18}. 

The quadratic time complexity, however, is considered to be a major drawback of DTW on very long time series even in optimized nearest neighbor search applications that apply sophisticated pruning and lower-bounding techniques~\cite{SGKB2018}.
Note that in general there is not much hope to find strongly subquadratic algorithms since it has been shown that DTW cannot be computed in~$O(n^{2-\epsilon})$ time for any~$\epsilon > 0$~\cite{ABW15,BK15} even on time series over an alphabet of size three~\cite{Kuszmaul19} (assuming the Strong Exponential Time Hypothesis\footnote{The SETH asserts that \textsc{SAT} cannot be solved in $(2-\epsilon)^n\cdot(n+m)^{O(1)}$ time for any~$\epsilon > 0$, where~$n$ is the number of variables and~$m$ is the number of clauses~\cite{IPZ01}.}).
Long time series of length $n\gg 10,000$ occur, for example, when measuring electrical power of household appliances with a sampling rate of a few seconds collected over several months, twitter activity data sampled in milliseconds, and human activities inferred from a smart home environment~\cite{MCAHM16}. All these time series have in common that they contain long constant segments.

Recently, several algorithms have been devised to cope with long time series that contain constant segments (called \emph{runs})~\cite{DM16,MCAHM16,HG17,SDHDKJC18,HG18,HG19}.
The basic idea of these algorithms is to exploit the repetitions of values within a time series to speed up computation of the DTW distance. We briefly summarize some of these algorithms (see also \Cref{tab:algs}).

\begin{itemize}
\item AWarp~\cite{MCAHM16}:
  This algorithm is exact for binary time series (a formal proof is missing) and exploits repetitions of zeros. The running time is~$O(m_1m_2)$, where~$m_1$ and~$m_2$ are the numbers of non-zero entries in the two input time series.

  \item Sparse DTW (SDTW)~\cite{HG17}: This algorithm yields exact DTW distances for arbitrary time series in~$O((m_1+m_2)n)$ time, where~$m_1$ and~$m_2$ are the numbers of non-zero entries in the two input series (assuming both have length~$n$).

    \item Binary Sparse DTW (BSDTW)~\cite{HG19}: This algorithm computes exact DTW distances between two binary time series in~$O(m_1m_2)$ time, where~$m_1$ and~$m_2$ are the numbers of non-zero entries in the two input time series. In practice it is often faster than AWarp.
    
  \item Blocked DTW (BDTW)~\cite{SDHDKJC18} (earlier introduced as Coarse-DTW~\cite{DM16}): This algorithm operates on run-length encoded time series. The run-length encoding represents a run of identical values (constant segment) by storing only a single value together with the length of the run. BDTW yields an upper and a lower bound on the DTW distance and is exact on binary time series (a formal proof is missing).
    The running time is~$O(k\ell)$, where~$k$ and~$\ell$ are the numbers of runs in the two input time series (note that $k\ell\in O(m_1m_2)$).
    BDTW is faster than AWarp in practice.    
\end{itemize}
 
Clearly, AWarp, BDTW and BSDTW are limited in that they only yield exact DTW distances for binary time series.
There are several recent (theoretical) results regarding exact DTW computation.
\citet{ABW15} gave an algorithm which computes exact DTW distances on binary length-$n$ time series in~$O(n^{1.87})$ time.
\citet{GS18} showed a subquadratic $O(n^2 \log\log\log n / \log\log n)$-time algorithm and \citet{Kuszmaul19} developed an~$O(n\cdot \dtw(x,y))$-time algorithm assuming that the minimum non-zero local cost is one.

Notably, specialized algorithms for other string problems on run-length encoded strings have also been studied recently, for example, for Longest Common Subsequence~\cite{AAR14,YNIBT20} and Edit Distance~\cite{CC13,CGKMU19}, which have applications in sequence alignment in bioinformatics.

\begin{table}[t]
  \centering
  \caption{Overview of some DTW algorithms and their characteristics.
    $n$: maximum input length,
    $m_1$, $m_2$: number of non-zero entries in inputs,
    $k,\ell$: number of runs in inputs.
  }\label{tab:algs}
  \begin{tabular}{llcc}
    \toprule
    Algorithm & Running Time & Domain & Exactness\\
    \midrule
    AWarp~\cite{MCAHM16} & $O(m_1m_2)$ & arbitrary & binary\\
    SDTW~\cite{HG17} & $O((m_1+m_2)n)$ & arbitrary & arbitrary\\
    BSDTW~\cite{HG19} & $O(m_1m_2)$ & binary & binary\\
    BDTW~\cite{SDHDKJC18,DM16} & $O(k\ell)$ & arbitrary & binary\\
    \bottomrule
  \end{tabular}
\end{table}

\paragraph*{Our Contributions.}
We develop an algorithm that computes exact DTW distances for arbitrary run-length encoded time series.
Let $x$ and~$y$ be two time series of length~$m$ and~$n$, where~$x$ contains~$k$ runs and~$y$ contains~$\ell$ runs.
Then, our algorithm (\Cref{thm:kappa}) computes the DTW distance in~$O(\kappa)$ time,\footnote{Throughout this work we neglect running times for arithmetical operations.} where $\kappa$ is a number depending on the individual lengths of the runs in~$x$ and~$y$ (see \Cref{sec:algo} for details). For~$\kappa$, the following upper bound holds:
\[\kappa \in \begin{cases}O(k^2\ell+k\ell^2): &\text{if } k\in O(\sqrt{m}) \text{ and } \ell\in O(\sqrt{n})\\O(kn+\ell m): &\text{otherwise}\end{cases}.\]
That is, the running time is at most cubic in~$\max(k,\ell)$
and is asymptotically faster than~$O(mn)$ if $k\in o(m)$ and~$\ell\in o(n)$.
To the best of our knowledge, this is the first exact algorithm whose running time only depends on the lengths of the run-length encodings of the inputs.

In addition, we show that if all runs in both time series have the same length, then our algorithm even runs in~$O(k\ell)$ time (\Cref{cor:kl-time}) and is in fact equivalent to BDTW.
That is, we prove that BDTW is exact in this case.

In experiments we compare our algorithm with the previously mentioned alternatives (\Cref{tab:algs}) and show that it is indeed the fastest exact algorithm on time series with short run-length encodings.

\section{Preliminaries}\label{sec:prelim}

We give some preliminary definitions and introduce notation.

\paragraph*{Notation.} Let~$[n]:=\{1,\ldots,n\}$ and~$[a,b]:=\{a,a+1,\ldots,b\}$.
An $m\times n$ table~$T$ consists of~$m$ rows and~$n$ columns, where~$T[i,j]$
denotes the entry in the~$i$-th row and~$j$-th column.

\paragraph*{Time Series.}
A time series is a finite sequence~$x = (x_1,\ldots,x_n)$ of rationals.
The \emph{run-length encoding} of a time series~$x$ is the sequence~$\tilde{x}=((\tilde{x}_1,n_1),\ldots,(\tilde{x}_k,n_k))$ of pairs~$(\tilde{x}_i,n_i)$ where~$n_i$ is a positive integer denoting the number of consecutive repetitions (run length) of the value~$\tilde{x}_i$ in~$x$.
Note that~$\sum_{i=1}^kn_i = n$. We call~$n$ the \emph{length} of~$x$ and we call~$k$ the \emph{coding length} of~$x$.

\paragraph*{Dynamic Time Warping.}
The dynamic time warping distance is a distance measure between time series using non-linear alignments defined by warping paths~\cite{SC78}.
\begin{definition}
  A \emph{warping path} of order~$m\times n$ is a sequence~$p=(p_1,\ldots,p_L)$, $L\in\mathbb{N}$,
  of index pairs $p_\ell=(i_\ell,j_\ell)\in [m]\times[n]$, $1\le \ell \le L$, such that
  \begin{compactenum}[(i)]
    \item $p_1=(1,1)$,
    \item $p_L=(m,n)$, and
    \item $(i_{\ell+1}-i_\ell, j_{\ell+1}-j_\ell)\in \{(1,0),(0,1),(1,1)\}$ for each~$\ell \in [L-1]$.
  \end{compactenum}
\end{definition}

The set of all warping paths of order~$m\times n$ is denoted by~$\mathcal{P}_{m,n}$.
A warping path~$p\in\mathcal{P}_{m,n}$ defines an \emph{alignment} between two time series~$x=(x_1,\ldots,x_m)$ and~$y=(y_1,\ldots,y_n)$ in the following way:
A pair~$(i,j)\in p$ \emph{aligns} element~$x_i$ with~$y_j$ incurring a local cost of~$(x_i-y_j)^2$.
The \emph{cost} of a warping path~$p$ is $C(p)=\sum_{(i,j)\in p}(x_i-y_j)^2$.
The DTW distance between~$x$ and~$y$ is defined as
$$\dtw(x,y) := \min_{p\in\mathcal{P}_{m,n}}\sqrt{C(p)}.$$
It can be computed via dynamic programming in~$O(mn)$ time based on an $m\times n$ table~\cite{SC78}.

\section{The Algorithm}\label{sec:algo}

\begin{figure}
  \centering
  \begin{tikzpicture}[scale=0.4]
    \draw[fill=black!10!white] (0,0) rectangle (4,2);
    \draw[fill=black!40!white] (7,0) rectangle (12,2);
    \draw[fill=black!10!white] (12,0) rectangle (17,2);

    \draw[fill=black!10!white] (4,2) rectangle (7,6);
    \draw[fill=black!10!white] (7,2) rectangle (12,6);

    \draw[fill=black!10!white] (0,6) rectangle (4,16);
    \draw[fill=black!40!white] (4,6) rectangle (7,16);
    \draw[fill=black!10!white] (12,6) rectangle (17,16);

    \draw[help lines] (0,0) grid (17,16);
    
    \draw[line width=1pt] (16,0) rectangle (17,16);
    \draw[line width=1pt] (3,0) rectangle (4,16);
    \draw[line width=1pt] (6,0) rectangle (7,16);
    \draw[line width=1pt] (11,0) rectangle (12,16);

    \draw[line width=1pt] (0,1) rectangle (17,2);
    \draw[line width=1pt] (0,5) rectangle (17,6);
    \draw[line width=1pt] (0,15) rectangle (17,16);

    \draw (0.5,0.5) -- (15.5,15.5);
    \draw (2.5,0.5) -- (16.5,14.5);
    \draw (0.5,2.5) -- (13.5,15.5);
    \draw (0.5,12.5) -- (3.5,15.5);
    
    \draw (5.5,0.5) -- (16.5,11.5);
    \draw (10.5,0.5) -- (16.5,6.5);
    \draw (15.5,0.5) -- (16.5,1.5);
    \draw (11.5,0.5) -- (16.5,5.5);
    \draw (6.5,0.5) -- (16.5,10.5);

    \draw (1.5,0.5) -- (16.5,15.5);
    \draw (0.5,9.5) -- (6.5,15.5);
    \draw (0.5,4.5) -- (11.5,15.5);

    \draw[line width=2pt] (1,1) rectangle (2,2);
    \draw[line width=2pt] (2,1) rectangle (3,2);
    \draw[line width=2pt] (3,1) rectangle (4,2);
    \draw[line width=2pt] (6,1) rectangle (7,2);
    \draw[line width=2pt] (7,1) rectangle (8,2);
    \draw[line width=2pt] (11,1) rectangle (12,2);
    \draw[line width=2pt] (12,1) rectangle (13,2);
    \draw[line width=2pt] (16,1) rectangle (17,2);

    \draw[line width=2pt] (1,5) rectangle (2,6);
    \draw[line width=2pt] (3,5) rectangle (4,6);
    \draw[line width=2pt] (5,5) rectangle (6,6);
    \draw[line width=2pt] (6,5) rectangle (7,6);
    \draw[line width=2pt] (7,5) rectangle (8,6);
    \draw[line width=2pt] (10,5) rectangle (11,6);
    \draw[line width=2pt] (11,5) rectangle (12,6);
    \draw[line width=2pt] (15,5) rectangle (16,6);
    \draw[line width=2pt] (16,5) rectangle (17,6);

    \draw[line width=2pt] (3,15) rectangle (4,16);
    \draw[line width=2pt] (6,15) rectangle (7,16);
    \draw[line width=2pt] (11,15) rectangle (12,16);
    \draw[line width=2pt] (13,15) rectangle (14,16);
    \draw[line width=2pt] (15,15) rectangle (16,16);
    \draw[line width=2pt] (16,15) rectangle (17,16);

    \draw[line width=2pt] (3,2) rectangle (4,3);
    \draw[line width=2pt] (3,3) rectangle (4,4);
    \draw[line width=2pt] (3,7) rectangle (4,8);
    \draw[line width=2pt] (3,12) rectangle (4,13);
    
    \draw[line width=2pt] (6,0) rectangle (7,1);
    \draw[line width=2pt] (6,4) rectangle (7,5);
    \draw[line width=2pt] (6,6) rectangle (7,7);
    \draw[line width=2pt] (6,8) rectangle (7,9);
    \draw[line width=2pt] (6,10) rectangle (7,11);
    
    \draw[line width=2pt] (11,0) rectangle (12,1);
    \draw[line width=2pt] (11,6) rectangle (12,7);
    \draw[line width=2pt] (11,10) rectangle (12,11);
    \draw[line width=2pt] (11,11) rectangle (12,12);
    \draw[line width=2pt] (11,9) rectangle (12,10);
    \draw[line width=2pt] (11,13) rectangle (12,14);

    \draw[line width=2pt] (16,6) rectangle (17,7);
    \draw[line width=2pt] (16,10) rectangle (17,11);
    \draw[line width=2pt] (16,11) rectangle (17,12);
    \draw[line width=2pt] (16,14) rectangle (17,15);
    
    \node at (3.5,17) {$b_1$};
    \node at (6.5,17) {$b_2$};
    \node at (11.5,17) {$b_3$};
    \node at (16.5,17) {$b_4$};

    \node at (17.75,1.5) {$a_1$};
    \node at (17.75,5.5) {$a_2$};
    \node at (17.75,15.5) {$a_3$};

    \node at (0.5,-0.5) {1};
    \node at (1.5,-0.5) {1};
    \node at (2.5,-0.5) {1};
    \node at (3.5,-0.5) {1};
    \node at (4.5,-0.5) {0};
    \node at (5.5,-0.5) {0};
    \node at (6.5,-0.5) {0};
    \node at (7.5,-0.5) {2};
    \node at (8.5,-0.5) {2};
    \node at (9.5,-0.5) {2};
    \node at (10.5,-0.5) {2};
    \node at (11.5,-0.5) {2};
    \node at (12.5,-0.5) {1};
    \node at (13.5,-0.5) {1};
    \node at (14.5,-0.5) {1};
    \node at (15.5,-0.5) {1};
    \node at (16.5,-0.5) {1};

    \node at (-0.5,0.5) {0};
    \node at (-0.5,1.5) {0};
    \node at (-0.5,2.5) {1};
    \node at (-0.5,3.5) {1};
    \node at (-0.5,4.5) {1};
    \node at (-0.5,5.5) {1};
    \node at (-0.5,6.5) {2};
    \node at (-0.5,7.5) {2};
    \node at (-0.5,8.5) {2};
    \node at (-0.5,9.5) {2};
    \node at (-0.5,10.5) {2};
    \node at (-0.5,11.5) {2};
    \node at (-0.5,12.5) {2};
    \node at (-0.5,13.5) {2};
    \node at (-0.5,14.5) {2};
    \node at (-0.5,15.5) {2};

    \node at (-0.5,16.5) {$x$};
    \node at (17.5,-0.5) {$y$};
  \end{tikzpicture}
  \caption{Example of a DTW matrix for two time series~$x$ and~$y$ with run-length encodings $\tilde{x}=((0,2),(1,4),(2,10))$ and $\tilde{y}=((1,4),(0,3),(2,5),(1,5))$. Colors indicate the local costs~$(x_i-y_j)^2$ of blocks (white = 0, light gray = 1, dark gray = 4).
    It is sufficient to compute the bold-framed entries in order to determine~$\dtw(x,y)$
    since there exists an optimal warping path moving only along boundaries of blocks (rows~$a_1,a_2,a_3$ and columns~$b_1,b_2,b_3,b_4$) and the indicated block diagonals~$\mathcal L$.
  }
  \label{fig:example}
\end{figure}

In the following, let~$x=(x_1,\ldots,x_m)$ and~$y=(y_1,\ldots,y_n)$ be two time series with run-length encodings~$\tilde{x}=((\tilde{x}_1,m_1),\ldots,(\tilde{x}_k,m_k))$ and~$\tilde{y}=((\tilde{y}_1,n_1),\ldots,(\tilde{y}_\ell,n_\ell))$.
  We define $a_0 := 0$, $a_i:=\sum_{j=1}^im_j$ for $i \in[k]$ and $b_0:=0$, $b_i:=\sum_{j=1}^in_j$ for $i\in[\ell]$.
  Consider the~$m\times n$ DTW matrix~$D$, where~$D[i,j]=\dtw((x_1,\ldots,x_i),(y_1,\ldots,y_j))^2$. Note that~$D$ can be structured into~$k\ell$ blocks $B_{i,j}=[a_{i-1}+1,a_i]\times [b_{j-1}+1,b_j]$, $i\in [k]$, $j\in[\ell]$, where each step inside~$B_{i,j}$ has local cost~$c_{i,j}:=(\tilde{x}_i-\tilde{y}_j)^2$. The \emph{right} boundary of~$B_{i,j}$ corresponds to column~$b_j$ of~$D$ and the \emph{top} boundary is formed by row~$a_i$ of~$D$ (see \Cref{fig:example}).

  We show that it is sufficient to compute only certain entries on the boundaries of blocks instead of all $mn$ entries in~$D$.
  To this end, we analyze the structure of optimal warping paths.
  We begin with the following simple observation.

  \begin{observation}\label{obs:block}
    There exists an optimal warping path~$p$ such that the following holds for every block~$B$:
    If~$p$ moves through~$B$, then~$p$ first moves diagonally through~$B$ until it reaches a boundary of~$B$.
  \end{observation}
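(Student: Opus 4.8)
The plan is an exchange argument. Starting from an arbitrary optimal warping path, I would repeatedly apply local rewritings that never increase the cost and that push diagonal moves to the front inside each block, and then argue that a path to which no rewriting applies has the stated shape.

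Concretely, I would use three rewriting rules acting on two consecutive steps of a warping path~$p$ (each step being an element of $\{(1,0),(0,1),(1,1)\}$). \emph{Rule R1}: replace a pair $(0,1)(1,0)$ or $(1,0)(0,1)$ by the single step~$(1,1)$, which deletes one cell of~$p$. \emph{Rule R2}: replace a pair $(0,1)(1,1)$ whose step~$(0,1)$ leaves a cell $(i,j)$ with $x_i = x_{i+1}$ by the pair $(1,1)(0,1)$. \emph{Rule R3}: symmetrically, replace a pair $(1,0)(1,1)$ whose step~$(1,0)$ leaves a cell $(i,j)$ with $y_j = y_{j+1}$ by $(1,1)(1,0)$. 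Each rule keeps the two endpoints of the affected portion, so the result is again a warping path of order~$m\times n$. Rule~R1 removes a cell, hence does not increase the cost; starting from an optimal path it therefore keeps the cost at the optimum. Rules~R2 and~R3 keep the cost \emph{exactly} the same: the only relocated cell moves from row~$i$ to row~$i+1$ (resp.\ from column~$j$ to column~$j+1$), and the side condition $x_i = x_{i+1}$ (resp.\ $y_j = y_{j+1}$) makes its local cost invariant. Thus every intermediate path is still optimal. Finally, R1 strictly decreases the number~$L$ of cells of~$p$, while R2 and R3 keep~$L$ fixed and strictly increase $\Phi(p):=\sum_{(i,j)\in p}(i+j)$; since both $L$ and $\Phi(p)$ are bounded, $(L,-\Phi(p))$ decreases lexicographically under every rewriting, so the process terminates with some optimal path~$p^{\star}$ to which none of R1, R2, R3 applies.

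It remains to read the claimed structure off~$p^{\star}$. The key sub-claim is that if a cell $(r,s)\in p^{\star}$ lies strictly inside its block~$B_{i,j}$, i.e.\ $r < a_i$ and $s < b_j$, then the step of~$p^{\star}$ leaving $(r,s)$ is diagonal. Suppose instead that it is a horizontal step~$(0,1)$. Then, since~R1 is inapplicable, $p^{\star}$ keeps moving horizontally inside row~$r$ until it finally increases its row (which it must, as $r < a_i \le m$); that first row increase is a step $(1,0)$ or $(1,1)$ immediately following a step~$(0,1)$ that stays in row~$r$. Both are forbidden: the former is an~R1 pattern, and the latter is an~R2 pattern because $x_r = x_{r+1}$ (as $r < a_i$ means $r$ and $r+1$ lie in the same run of~$x$). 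This contradiction, together with the symmetric argument (via~R3) ruling out a vertical step, proves the sub-claim. Now take any block~$B = B_{i,j}$ that~$p^{\star}$ enters, at cell~$(r,s)$ say; iterating the sub-claim, $p^{\star}$ moves diagonally from~$(r,s)$ onward until it first reaches a cell with row~$a_i$ or column~$b_j$, i.e.\ the top or right boundary of~$B$. Since a block is always entered through its bottom or left boundary, this is exactly the assertion (there is nothing to prove for~$B$ if~$(r,s)$ already lies on its top or right boundary).

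The part I expect to need the most care is the bookkeeping around the rewriting rules: verifying that~R1 is genuinely cost-preserving on an optimal path (it cannot strictly decrease the cost without contradicting optimality), that the side conditions of~R2 and~R3 are precisely the run-value equalities that make these rules cost-neutral, and that the lexicographic potential $(L,-\Phi)$ really forbids infinite rewriting cascades. The geometric content, on the other hand, is simple: inside a block all local costs coincide, so a diagonal step is never worse than a horizontal step followed later by a vertical step, and an optimal path can always be straightened to hug a block diagonal first and a block boundary afterwards.
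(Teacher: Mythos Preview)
Your proof is correct and rests on the same idea as the paper's---since every step inside a block has the same local cost, diagonal steps can always be pushed to the front without increasing the cost---but the paper disposes of the observation in a single sentence of intuition, whereas you carry out the exchange argument in full with explicit rewriting rules and a lexicographic termination measure. The extra rigor is not wasted: your sub-claim (interior cells of~$p^{\star}$ are left diagonally) and the run-value side conditions on~R2/R3 make precise exactly what the paper leaves implicit.
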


  This is true since every step inside a block costs the same. Hence, it is optimal to maximize the number of diagonal steps (which minimizes the overall number of steps to reach a boundary of a block). \Cref{obs:block} implies that there exists an optimal warping path which is an alternation of diagonal and horizontal (or vertical) subpaths where the horizontal (vertical) subpaths are always on top (right) boundaries of blocks.
Note that this implies an easy~$O(kn+\ell m)$-time algorithm which only computes the entries on the boundaries via dynamic programming.

Now, we restrict the possible diagonals along which such an alternating optimal warping path might move.
  To this end, let~$L_{i,j}$, $(i,j)\in[k]\times[\ell]$, denote the diagonal in~$D$ going through the upper right corner of block~$B_{i,j}$ (that is, through the entry~$(a_i,b_j)$) and let~$L_{0,0}$ be the diagonal (corresponding to~$(a_0,b_0)$) going through~$(1,1)$. We denote the set of all these \emph{block diagonals} by~$\mathcal L$ (see \Cref{fig:example}).
Now, our key lemma states that there always exists an optimal warping path which only moves along block boundaries and block diagonals (we call such a warping path \emph{diagonal-conform}).
  
\begin{figure}
  \centering
  \begin{tikzpicture}[scale=0.4]
    \foreach \i in {0,1,...,17} {
      \draw[help lines] (\i,\i) grid (\i+7,\i+1);
    } 

    \node at (-1,-1) {$L$};
    \node at (5,-1) {$L'$};
    \node at (-1,0.5) {$a_i$};
    \node at (25,17.5) {$a_{i'}$};
    \draw[line width=1pt] (-0.5,-0.5) -- (18.5,18.5);
    \draw[line width=1pt] (5.5,-0.5) -- (24.5,18.5);
    \draw[line width=1.5pt] (0,0) rectangle (7,1);
    \draw[line width=1.5pt] (3,3) rectangle (10,4);
    \draw[line width=1.5pt] (11,5) rectangle (12,12);
    \draw[line width=1.5pt] (15,9) rectangle (16,16);
    \draw[line width=1.5pt] (15,15) rectangle (22,16);
    \draw[line width=1.5pt] (17,17) rectangle (24,18);
    \draw (0.5,0.5) -- (2.5,0.5) -- (19.5,17.5) -- (23.5,17.5);
  \end{tikzpicture}
  \caption{Example of a warping path moving diagonally in between two neighboring diagonals~$L$ and~$L'$. Block boundaries are framed in thick lines. Note that there cannot be an
    upper right block corner anywhere in between~$L$ and~$L'$. Hence, when shifting the warping path to the right from~$L$ to~$L'$, the cost changes monotonically (linearly).}
  \label{fig:diagonals}
\end{figure}
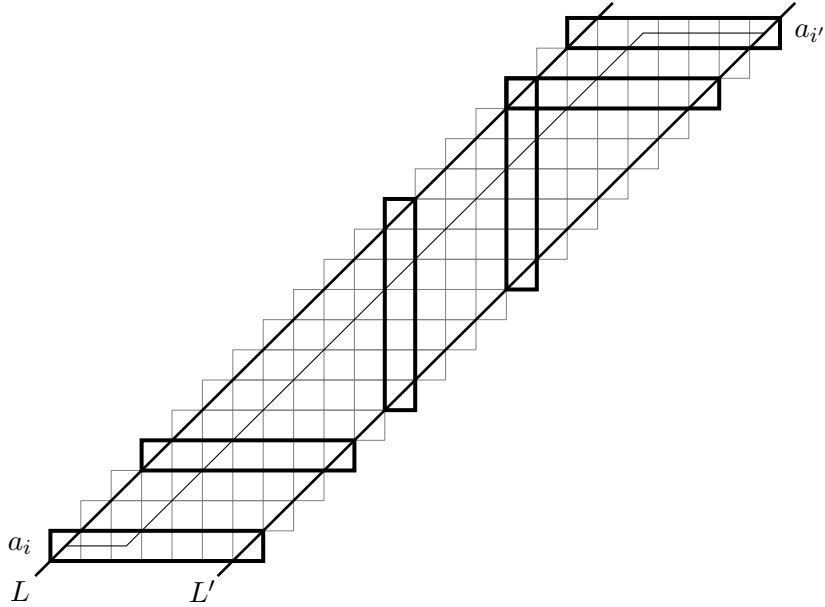

\begin{lemma}\label{lem:diagonals}
  There exists an optimal warping path which is diagonal-conform.
\end{lemma}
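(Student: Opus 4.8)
The plan is to start from an optimal warping path that is already an alternation of diagonal and axis-parallel subpaths along block boundaries, as guaranteed by \Cref{obs:block}, and then argue that each diagonal subpath can be ``pushed'' onto one of the block diagonals in~$\mathcal L$ without increasing the cost. Concretely, consider a maximal diagonal subpath~$Q$ of such a path; it runs from some point on a block boundary to some other point on a block boundary, and by \Cref{obs:block} it does not pass through any upper-right block corner in its interior. Let~$L$ be the nearest block diagonal in~$\mathcal L$ to the left of (or containing) $Q$ and~$L'$ the nearest one to the right; by maximality there is no block corner strictly between~$L$ and~$L'$ (this is the situation drawn in \Cref{fig:diagonals}). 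I would then show that shifting the entire diagonal subpath rigidly from its current position toward~$L$ (or toward~$L'$) changes the total cost linearly in the shift amount, so one of the two extreme positions is no worse.

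The key computation is the linearity claim. Fix the ``horizontal offset'' $d$ of the diagonal subpath and write the cost of the resulting global path as a function~$f(d)$. When we shift the diagonal piece by one unit (say to the right), the diagonal cells it visits change: in each block row that the subpath crosses, we drop one cell of that row's block and add one cell of the block to its right. As long as no block corner is crossed, the multiset of blocks that the subpath touches does not change its combinatorial structure — only the lengths of the horizontal ``connector'' segments at top boundaries and the position where the diagonal enters/leaves each block shift by one. Because all cells within a block have the same local cost, the change in cost contributed by each crossed block-row is exactly $c_{i,j_{\mathrm{right}}} - c_{i,j_{\mathrm{left}}}$, a constant independent of~$d$; summing over the crossed rows gives that $f(d+1)-f(d)$ is the same for every feasible~$d$ in the allowed range. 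Hence $f$ is affine on the interval of valid shifts, so its minimum is attained at an endpoint, i.e.\ when the subpath lies on~$L$ or on~$L'$. One must also check feasibility at the endpoints: the shifted path must still be a valid warping path, which follows because the block boundaries the subpath is anchored to extend all the way to the diagonals~$L$ and~$L'$ (again since no corner intervenes), so the connecting horizontal/vertical segments along the top/right boundaries simply lengthen or shorten but never become ill-defined.

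I would then apply this shifting argument repeatedly, processing diagonal subpaths one at a time; each step reduces the number of diagonal subpaths that are not yet on a block diagonal, or keeps the cost unchanged while moving one onto a diagonal, so after finitely many steps every diagonal subpath lies on some~$L_{i,j}\in\mathcal L$, and the axis-parallel subpaths already lie on block boundaries by construction. Care is needed that shifting one diagonal subpath does not destroy the ``block-diagonal'' status of a previously fixed one — this is handled by always choosing to shift toward the side that does not disturb already-processed segments, or by observing that adjacent diagonal subpaths are separated by a horizontal (or vertical) subpath along a top (right) boundary whose length can absorb the shift independently. The resulting path is diagonal-conform and has cost at most that of the original optimal path, hence is itself optimal.

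The main obstacle I anticipate is making the ``shift the diagonal subpath'' operation fully precise as a transformation on warping paths and verifying that it yields a legal warping path at every intermediate value of~$d$ — in particular handling the interaction at the two ends of the diagonal subpath, where it meets horizontal/vertical connectors, and ensuring the monotone (linear) cost behavior holds uniformly across all crossed block rows including the partial blocks at the top and bottom of the subpath. Once the bookkeeping for a single shift is set up correctly, the linearity of~$f$ and the endpoint-optimality conclusion are essentially immediate, and the global induction over diagonal subpaths is routine.
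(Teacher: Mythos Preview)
Your approach is essentially the paper's: start from an alternating optimal path (\Cref{obs:block}), shift each off-diagonal diagonal subpath toward a neighboring block diagonal, argue the cost changes linearly in the shift, and conclude an endpoint realizes the minimum. The paper phrases the last step as a contradiction (if the one-step shift strictly helps in one direction, the opposite shift strictly hurts, contradicting optimality), but this is equivalent to your affine-hence-endpoint argument.

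One point to tighten: your explanation of \emph{why} the cost is affine is not the right picture. Shifting the diagonal subpath one unit to the right does \emph{not} ``drop one cell of that row's block and add one cell of the block to its right'' per block row. The correct bookkeeping is per block \emph{crossed by the diagonal}: if $q$ traverses a block from its left boundary to its top boundary, the shifted $q'$ spends one more step there; from bottom to right, one fewer step; from bottom to top or left to right, the same number. (This is exactly where the hypothesis ``no upper-right block corner lies strictly between $L$ and $L'$'' is used: it guarantees the entry/exit pattern of each crossed block is unchanged by the shift.) Summing these $\pm 1$ contributions weighted by the block costs $c_{i,j}$ gives a constant increment, hence affinity. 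Your anticipated ``main obstacle''---handling the partial blocks at the two ends where the diagonal meets the horizontal connectors---is precisely the $B_{i,j}$ and $B_{i',j'}$ cases in this analysis and fits the same pattern. Once you replace the block-row description by this entry/exit case split, the argument goes through exactly as you outline.
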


\begin{proof}
  By definition, every warping path initially starts in~$(1,1)$ on the diagonal~$L_{0,0}\in\mathcal L$.
  Let~$p$ be an optimal warping path which alternates between diagonals and block boundaries as described in \Cref{obs:block}.
  Assume that~$p$ does not only move along diagonals in~$\mathcal L$.
  Then, by assumption, $p$ leaves some diagonal~$L\in \mathcal L$ on a boundary (wlog horizontally on the top boundary~$a_i$) of a block~$B_{i,j}$ and (diagonally) enters the neighboring block~$B_{i+1,j}$ before the next intersection of a diagonal~$L'\in\mathcal L$ with~$a_i$. It then proceeds diagonally in between~$L$ and~$L'$ until reaching some block boundary where it moves horizontally or vertically again.
  Note that~$p$ has to move horizontally or vertically again at some point since it has to reach a diagonal in~$\mathcal L$ again (this holds because every warping path eventually ends up on~$L_{k,\ell}\in\mathcal{L}$).
  Assume that~$p$ moves diagonally only until reaching the top boundary~$a_{i'}$ of a block~$B_{i',j'}$, $i' > i$, $j'\ge j$, where $p$ moves horizontally
  (analogous arguments apply if~$p$ moves vertically on a right boundary of a block in between~$L$ and~$L'$).
  See \Cref{fig:diagonals} for an example.
  Observe that a warping path can only enter blocks from bottom (that is, from the top boundary of the block below) or left (that is, from the right boundary of the block to the left) and exit blocks from top or right boundaries.
  
  Let~$h_i\ge 1$ denote the number of horizontal steps of~$p$ on~$a_i$ and let~$h_{i'}\ge 1$ be the number of horizontal steps on~$a_{i'}$. Let~$q$ denote the diagonal subpath of~$p$ from~$a_i$ to~$a_{i'}$.
  Now, consider the warping path~$p'$ obtained from~$p$ by ``shifting'' $q$ to the right, that is, $p'$ takes~$h_i+1$ horizontal steps on~$a_i$ and only~$h_{i'}-1$ horizontal steps on~$a_{i'}$.
  Let~$q'$ be the shifted diagonal subpath and note that~$q'$ crosses the same blocks as~$q$. This is true since there cannot be an upper right corner of any block anywhere in the region between~$L$ and~$L'$ (since they are neighboring diagonals from~$\mathcal L$).

  Let us now consider the number of steps taken by~$p'$ within each block from~$B_{i,j}$ to~$B_{i',j'}$.
  Clearly, $p'$ takes one more step inside~$B_{i,j}$ than~$p$. Regarding~$B_{i',j'}$, if~$q$ enters~$B_{i',j'}$ from bottom, then $q'$ takes one step less inside~$B_{i',j'}$.
  Otherwise, if~$q$ enters~$B_{i',j'}$ from the left, then~$q'$ takes the same number of steps inside~$B_{i',j'}$ as~$q$.
  For every block~$B$ in between~$B_{i,j}$ and~$B_{i',j'}$ which is crossed by~$q$, the following holds:
  \begin{itemize}
    \item If~$q$ crosses~$B$ from left to top, then~$q'$ takes one more step.
    \item If~$q$ crosses~$B$ from bottom to right, then~$q'$ takes one step less.
    \item If~$q$ crosses~$B$ from bottom to top (or from left to right), then~$q'$ takes the same number of steps.
  \end{itemize}
  The above holds since~$q$ cannot pass through an upper right corner of a block in between~$L$ and~$L'$.
  Note that the number of steps taken by~$p$ and~$p'$ through any block differs by at most one.

  Now, let~$\mathcal B$ be the set of blocks where~$p$ takes more steps than~$p'$ and let~$\mathcal B'$ be the set of blocks where~$p'$ takes more steps than~$p$.
  Let~$C = \sum_{B_{i,j}\in\mathcal B}c_{i,j}$ and~$C' = \sum_{B_{i,j}\in\mathcal B'}c_{i,j}$.
  Then, the cost difference between~$p$ and~$p'$ is~$C - C'$.
  By optimality of~$p$, we have~$C-C' \le 0$, that is, $C\le C'$.
  
  If~$C = C'$, then also~$p'$ is an optimal warping path.
  Thus, by analogous arguments, shifting $h_{i'}$ times to the right yields an optimal warping path that does not move horizontally on~$a_{i'}$ anymore.
  If this warping path now already moves diagonally along~$L'$ (as it would be the case in \Cref{fig:diagonals} when shifting four times to the right), then this proves the claim.
  If this is not case, then analogous arguments apply again for the next occurrence of a horizontal (or vertical) subpath in between~$L$ and~$L'$. This finally yields an optimal warping path moving along~$L'$ (or $L$) proving the claim.

  If~$C < C'$, then we can analogously shift~$q$ to the left to obtain a warping path~$p''$.
  Clearly, the blocks where~$p''$ takes one more step than~$p$ are exactly the blocks~$\mathcal B$,
  and the blocks where~$p$ takes one more step than~$p''$ are exactly the blocks~$\mathcal B'$.
  Hence, the cost difference between~$p''$ and~$p$ is also~$C-C' < 0$, which contradicts the optimality of~$p$.
\end{proof}

Clearly, an optimal diagonal-conform warping path can be computed from only those entries in~$D$ which are an intersection of a block boundary and a block diagonal in~$\mathcal L$ (in \Cref{fig:example} these intersections are framed in bold).
In the following, we denote the number of these intersections by~$\kappa$.
Note that $$k\ell \le \kappa \le (k+\ell)|\mathcal L|\le (k+\ell)(k\ell+1),$$ that is, $\kappa\in O(k^2\ell + k\ell^2)$.
We need to compute optimal diagonal-conform warping paths to these intersections.
From the proof of \Cref{lem:diagonals}, we can actually infer the following corollary about optimal diagonal-conform
warping paths to any intersection.
\begin{corollary}\label{cor:diags}
  Let $B_{i,j}$ be a block and consider an intersection~$z$ of its top or right boundary with a diagonal~$L\in\mathcal{L}$.
  There is an optimal diagonal-conform warping path to~$z$ whose diagonal subpaths are only on diagonals from $\{L\}
  \cup\{L_{i',j'}\mid i'\leq i, j'\leq j\}$.
\end{corollary}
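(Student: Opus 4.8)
The plan is to deduce the corollary from \Cref{lem:diagonals} applied not to the whole instance but to the prefix of the instance that ends at~$z$, and then to remove the few additional block diagonals that this detour introduces. By symmetry it suffices to treat the case that $z$ lies on the top boundary of~$B_{i,j}$, so that $z=(a_i,c)$ for some~$c$ with $b_{j-1}<c\le b_j$ (if $z$ lies on the right boundary, swap the roles of rows and columns throughout). Note that $L$ is then precisely the diagonal of~$D$ through~$(a_i,c)$.

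First I would pass to the prefix time series $x'=(x_1,\ldots,x_{a_i})$ and $y'=(y_1,\ldots,y_c)$ with DTW matrix~$D'$, i.e.\ the submatrix of~$D$ on rows $1,\ldots,a_i$ and columns $1,\ldots,c$. Every warping path from~$(1,1)$ to~$z$ stays inside these rows and columns by monotonicity, and its cost depends only on $x_1,\ldots,x_{a_i}$ and $y_1,\ldots,y_c$; hence optimal warping paths to~$z$ in~$D$ are exactly the optimal (full) warping paths of~$D'$, and $z$ is the corner $(a_i,c)$ of~$D'$. Applying \Cref{lem:diagonals} to $x'$ and $y'$ yields an optimal warping path~$p$ of~$D'$ that is diagonal-conform with respect to the set~$\mathcal L'$ of block diagonals of~$D'$. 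The run-length encodings of $x'$ and $y'$ are $((\tilde x_1,m_1),\ldots,(\tilde x_i,m_i))$ and $((\tilde y_1,n_1),\ldots,(\tilde y_{j-1},n_{j-1}),(\tilde y_j,c-b_{j-1}))$, so $\mathcal L'$ consists of the diagonals through $(a_{i'},b_{j'})$ for $0\le i'\le i$ and $0\le j'\le j-1$, all of which belong to $\{L_{i',j'}\mid i'\le i,\ j'\le j\}$, together with the diagonals through $(a_{i'},c)$ for $0\le i'\le i$; the one through $(a_i,c)$ is~$L$ itself. It therefore only remains to make $p$ avoid the diagonals through $(a_{i'},c)$ for $i'<i$.

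For this I would adapt the shifting argument from the proof of \Cref{lem:diagonals}. A diagonal through $(a_{i'},c)$ with $i'<i$ meets the last column of~$D'$ only in the single cell $(a_{i'},c)$, which lies strictly below~$z$, so a diagonal subpath of~$p$ on such a diagonal is always immediately followed by a run of vertical steps up column~$c$ towards~$z$. The key point is the very same linear-cost behaviour exploited in \Cref{lem:diagonals}: no upper-right block corner of~$D'$ lies strictly between two consecutive block diagonals of~$\mathcal L'$, so shifting such a subpath onto the block diagonal of next-smaller offset does not increase the cost. Iterating this, and using that the ``forbidden'' offsets $c-a_0>c-a_1>\cdots>c-a_i$ are strictly decreasing with $c-a_i$ the offset of~$L$, every diagonal subpath on a forbidden diagonal is eventually pushed onto~$L$ or onto some $L_{i',j'}$ with $i'\le i$ and $j'\le j$; the resulting path is an optimal diagonal-conform warping path to~$z$ with the claimed property.

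I expect the last step to be the main obstacle. The shifting operation of \Cref{lem:diagonals} is phrased for two neighbouring diagonals in the interior of the matrix, whereas here it must be applied against the last column of~$D'$, where the diagonals through the various $(a_s,c)$ collapse onto that column; one has to verify that the operation is still well defined there (the shifted diagonal subpath is simply followed by vertical steps up column~$c$ to~$z$, and the adjacent block-boundary pieces absorb the one extra or missing step) and that the hypothesis $c\le b_j$ — that $z$ stays within block column~$j$ — is exactly what guarantees that no genuine block corner of~$D'$ obstructs a shift between consecutive block diagonals, so that the cost indeed changes linearly and the invariance argument carries over.
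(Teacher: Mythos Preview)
Your idea of applying \Cref{lem:diagonals} as a black box to the truncated instance $(x',y')$ is natural, but the truncation creates a real obstacle that your shifting step does not overcome. By cutting $y$ at column~$c$ you introduce \emph{new} upper-right block corners $(a_{i'},c)$ for every $i'<i$, and the ``forbidden'' diagonals $L'_{i',j}$ pass right through them. The linearity argument from \Cref{lem:diagonals} only guarantees that the cost is affine \emph{between} consecutive block diagonals; at a block diagonal through a genuine corner the slope can change, so an optimal path may sit at a kink where shifting to smaller offset strictly increases the cost. Concretely, take $j=1$, $a_1=2$, $a_2=4$, $a_3=7$, $c=5$, with block costs $c_{1,1}=c_{3,1}=0$ and $c_{2,1}=1$. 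The path $(1,1),(2,2),(2,3),(3,4),(4,5),(5,5),(6,5),(7,5)$ is optimal and uses the forbidden diagonal $L'_{2,1}$ (offset~$1$) up to the artificial corner $(4,5)$; shifting this diagonal piece one step left forces an extra horizontal step on row~$4$ inside block row~$2$ and raises the cost from~$2$ to~$3$. So your assertion that ``shifting such a subpath onto the block diagonal of next-smaller offset does not increase the cost'' is false. (Shifting \emph{right} happens to be free here, but then one lands on another forbidden diagonal, and in general neither direction is uniformly safe once artificial corners are in play.)

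The paper sidesteps this entirely. It does not truncate the instance; it reuses the shifting argument of \Cref{lem:diagonals} directly on an optimal diagonal-conform path to~$z$ in the full matrix~$D$. There the only block corners are the $(a_{i^*},b_{j^*})$, and a path to~$z$ stays in $[1,a_i]\times[1,c]$, so any corner it could meet satisfies $i^*\le i$ and $j^*\le j$. Consequently, a diagonal subpath on a forbidden $L_{i',j'}$ (with $i'>i$ or $j'>j$) never touches its defining corner and can be shifted freely until it lies on~$L$ or on some $L_{i^*,j^*}$ with $i^*\le i$, $j^*\le j$. Working in~$D$ rather than in~$D'$ is the missing idea.
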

\noindent
\Cref{cor:diags} essentially follows from the same shifting argument as in the proof of \Cref{lem:diagonals}.
Consider an optimal diagonal-conform warping path to~$z$ that contains a diagonal
subpath~$q$ on a block diagonal~$L_{i',j'}\neq L$, where $i'> i$ or $j' > j$.
Note that we can actually shift the diagonal subpath~$q$ (without increasing the cost) until it lies on~$L$ or goes through an upper right corner of some block, that is,
the shifted subpath is on the diagonal of this block.
Clearly, this is a block~$B_{i^*,j^*}$ with $i^*\le i$ and $j^*\le j$.

We are now ready to prove our main result.

\begin{theorem}\label{thm:kappa}
  The DTW distance between time series~$x$ and~$y$ can be computed from~$\tilde{x}$ and~$\tilde{y}$ in~$O(\kappa)$ time, where~$\kappa$ is the number of intersections between block boundaries and block diagonals in the DTW matrix.
\end{theorem}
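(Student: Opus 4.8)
The plan is to convert the ordinary $m\times n$ dynamic program into one that operates only on the set $Z$ of the $\kappa$ boundary--diagonal intersections (together with the source $(1,1)$), using \Cref{lem:diagonals} and \Cref{cor:diags} to show that only $O(\kappa)$ of the usual transitions matter and that each intersection needs only a constant number of predecessors.

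The first step is a structural description of a diagonal-conform warping path~$p$: list the intersection points $z_0=(1,1),z_1,\dots,z_t=(m,n)$ through which~$p$ passes, in order. A diagonal-conform path can turn only on a block boundary while lying on a block diagonal, i.e.\ at a point of~$Z$, and it crosses a row~$a_i$ or column~$b_j$ only at a point of~$Z$ (in particular at a block corner~$(a_i,b_j)$, which lies on~$L_{i,j}\in\mathcal L$). Hence the portion of~$p$ between~$z_{s-1}$ and~$z_s$ is a monotone segment (all diagonal, all horizontal, or all vertical) lying inside a single block~$B_{i,j}$, and~$z_{s-1},z_s$ are consecutive intersections on the line (block diagonal, row~$a_i$, or column~$b_j$) carrying that segment; this uses that a block contains no block-boundary points in its interior. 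Writing~$\sigma(z_{s-1},z_s)$ for the number of steps of that segment, its cells other than~$z_{s-1}$ all have local cost~$c_{i,j}$, so the cost telescopes as $C(p)=c_{1,1}+\sum_{s=1}^{t}\sigma(z_{s-1},z_s)\,c_{i(s),j(s)}$, where $B_{i(s),j(s)}$ denotes the block containing the cell~$z_s$.

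This motivates the recurrence. For $z\in Z$ let~$f(z)$ be the minimum cost of a diagonal-conform warping path from~$(1,1)$ to~$z$; put $f((1,1))=c_{1,1}$ and, for every other~$z$ whose cell lies in block~$B_{i,j}$,
\[
  f(z)=\min_{z'\in\mathrm{pred}(z)}\bigl(f(z')+\sigma(z',z)\,c_{i,j}\bigr),
\]
where $\mathrm{pred}(z)$ contains the preceding intersection on the block diagonal through~$z$, the preceding intersection on row~$a_i$ (if~$z$ lies on a top boundary), and the preceding intersection on column~$b_j$ (if~$z$ lies on a right boundary), so $|\mathrm{pred}(z)|\le 3$. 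For correctness, every choice realizing~$f(z)$ concatenates to a genuine warping path of equal cost, giving $f(z)\ge D[z]$; conversely, by \Cref{cor:diags} there is a diagonal-conform path to~$z$ of cost~$D[z]$, and decomposing it as in the previous paragraph shows that each of its consecutive pairs is a predecessor relation covered by the recurrence, so an induction over~$Z$ gives $f(z)\le D[z]$. Hence $f(z)=D[z]$ for all $z\in Z$, and since $(m,n)=(a_k,b_\ell)$ lies on $L_{k,\ell}\in\mathcal L$ we obtain $\dtw(x,y)=\sqrt{f((m,n))}$.

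For the running time, $|Z|=\kappa+O(1)$. Enumerating the intersections on each block diagonal $L\in\mathcal L$ by walking~$L$ across the grid, and similarly listing the intersections on each row~$a_i$ and column~$b_j$, produces in~$O(\kappa)$ time all predecessor pointers and an order on~$Z$ (say by nondecreasing $r+c$, or row-major) in which every predecessor of~$z$ precedes~$z$; evaluating the recurrence then takes $O(1)$ per point, hence $O(\kappa)$ in total (neglecting arithmetic, as throughout). I expect the main obstacle to be making the structural claim of the second paragraph fully rigorous --- that between consecutive points of~$Z$ a diagonal-conform path is genuinely a monotone single-block segment whose endpoints are consecutive intersections on one line --- since this is exactly what forces the recurrence to have constant in-degree and thus total cost~$O(\kappa)$; it rests on \Cref{obs:block}, \Cref{lem:diagonals}, and \Cref{cor:diags}. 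The remaining parts (the telescoping of the cost, the DP order, and the preprocessing) are routine.
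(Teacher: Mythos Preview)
Your correctness argument is essentially the paper's: decompose an optimal diagonal-conform path at the intersection points and run a DP on~$Z$ with constant in-degree, using \Cref{cor:diags} for the direction $f(z)\le D[z]$. That part is fine.

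The gap is in the running time. You assert that ``enumerating the intersections on each block diagonal~$L$ by walking~$L$ across the grid, and similarly listing the intersections on each row~$a_i$ and column~$b_j$'' yields all predecessor pointers and a valid DP order in~$O(\kappa)$ time. Walking each diagonal block by block indeed lists its intersections in order in total time~$O(\kappa)$. But the ``similarly'' does not go through: the intersections on row~$a_i$ sit at columns $a_i+\sigma$ for the offsets~$\sigma$ of the diagonals in~$\mathcal L$, so listing them in column order is exactly sorting the multiset $\{b_j-a_{i'}\}$ of offsets. This is an instance of $X+Y$ sorting; no $O(k\ell)$ comparison algorithm is known, and counting/radix sort costs $O(k\ell+m+n)$, which can exceed~$\kappa$. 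Your proposed global order (row-major or by $r+c$) likewise needs a sort that you have not shown to be~$O(\kappa)$. As written, your preprocessing gives $O(\kappa\log\kappa)$, not~$O(\kappa)$.

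The paper avoids this by \emph{not} precomputing the sorted structure. It sweeps the blocks in row-major order while maintaining a doubly-linked list of diagonals sorted by offset; each new diagonal~$L_{i,j}$ is inserted at the position the sweep has already reached, so no global sort is ever performed. Diagonals~$L_{i',j'}$ that cross the boundary of an earlier block~$B_{i,j}$ but are only created later are handled by the \texttt{trace} routine, which retroactively fills in their past intersections using only the neighbouring diagonals in the current list. The reason this is still correct is precisely \Cref{cor:diags}: an optimal diagonal-conform path to such an intersection never needs a diagonal that has not yet been inserted, so the already-computed values on the neighbours suffice. This interleaving of insertion and evaluation is what buys the pure~$O(\kappa)$ bound; your offline scheme would need an analogous mechanism (or an $O(k\ell)$ sort of the offsets) to match it.
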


\newcommand{\var}[1]{{\rm\tt{#1}}}

\LinesNumbered
\begin{algorithm2e*}[t]
  \caption{Exact DTW for run-length encoded time series.}
  \label{alg:kappa DP}
  \DontPrintSemicolon
  \SetKwFunction{trace}{trace}
  \SetKwFunction{append}{appendentry}
  \SetKwFunction{prv}{previous}
  \SetKwFunction{nxt}{next}
  \SetKwData{diagonals}{diagonals}
  \SetKwData{offs}{offset}

  \KwIn{Run-length encodings $((\tilde{x}_1,m_1),\ldots,(\tilde{x}_k,m_k))$ and $((\tilde{y}_1,n_1),\ldots,(\tilde{y}_\ell,n_\ell))$ of time series~$x$ and $y$.}
  \KwOut{DTW distance between~$x$ and~$y$.}
    \lForEach(\tcp*[f]{\small compute local block costs}) {$(i,j)\in [k]\times[\ell]$} {
    $c_{i,j} := (\tilde{x}_i - \tilde{y}_j)^2$
  }
  $a_0:=0$\;
  \lForEach(\tcp*[f]{\small compute indices of top boundaries}) {$i\in[k]$} {
    $a_i := a_{i-1}+m_i$
  }
  $b_0:=0$\;
  \lForEach(\tcp*[f]{\small compute indices of right boundaries}) {$j\in[\ell]$} {
    $b_j := b_{j-1}+n_j$
  }
  $\diagonals\gets$ doubly-linked list of diagonals\;
  add dummy diagonal $-\infty$ with offset $-\infty$ containing entry $(-\infty,-\infty)$ with cost $\infty$\;
  add dummy diagonal $\infty$ with offset $\infty$ containing entry $(\infty,\infty)$ with cost $\infty$\;
  insert an empty diagonal $L_{0,0}$ with offset $0$ between $-\infty$ and $\infty$\;
  add entry $(0,0)$ with cost $0$ to $L_{0,0}$\;
  \ForEach{$i\in[k]$\nllabel{ln:first for}}{
    $L\gets$ first diagonal in \diagonals\tcp*[f]{\small $L=-\infty$ with offset $-\infty$}\;
    \ForEach{$j\in[\ell]$}{
      \lIf{$L \le L_{i,j-1}$}{$L\gets\diagonals.\nxt(L)$}
      \While(\tcp*[f]{\small diagonals intersecting top boundary of $B_{i,j}$}){$L < L_{i,j}$}{
        $\append(L,i,j)$\nllabel{ln:1st append}\;
        $L\gets \diagonals.\nxt(L)$\;
      }
      $L' \gets L$\;
      \lWhile{$L' < L_{i-1,j}$}{
        $L'\gets\diagonals.\nxt(L')$
      }
      $L' \gets \diagonals.\prv(L')$\;
      \While(\tcp*[f]{\small diagonals intersecting right boundary of $B_{i,j}$}){$L' > L_{i,j}$}{
        $\append(L',i,j)$\nllabel{ln:2nd append}\;
        $L'\gets\diagonals.\prv(L')$\;
      }
      \If(\tcp*[f]{\small insert new diagonal $L_{i,j}$}){$L > L_{i,j}$}{
        insert empty diagonal~$L_{i,j}$ with offset $b_j - a_i$ into \diagonals before $L$\nllabel{ln:new diag}\;
        $\trace(L_{i,j},i,j, \text{last entry on $\diagonals.\prv(L_{i,j})$},\text{last entry on $L$})$\nllabel{ln:trace}\;
      }\lElse(\tcp*[f]{\small diagonal $L_{i,j}$ exists already}){$\append(L,i,j)$\nllabel{ln:3rd append}}
    }
  }
  \Return{\normalfont cost of last computed entry}\;
\end{algorithm2e*}

\LinesNumberedHidden
\begin{function*}[ht]
  \caption{appendentry($L$, $i$, $j$)}
  \label{alg:append}
  \DontPrintSemicolon
  \SetKwData{cost}{cost}
  \SetKwData{col}{col}
  \SetKwData{row}{row}
  \SetKwData{diagonals}{diagonals}
  \SetKwData{offs}{offset}

  \KwIn{A diagonal~$L$ and block indices $i,j$ such that~$L$ intersects a boundary of~$B_{i,j}$.}
  \KwOut{Compute intersection of~$L$ and the boundary of~$B_{i,j}$ and add this entry to~$L$ with the cost of an optimal diagonal-conform
  warping path.}
  
  $z_L\gets$ last entry on $L$\;
  $(a,b)\gets (a_i,b_j)$\;
  $c \gets \infty$\;
  \If(\tcp*[f]{\small $L$ intersects top boundary of $B_{i,j}$}){$L \le L_{i,j}$}{
    $b\gets a_i + \offs(L)$\tcp*[f]{\small column of intersection}\;
    $z'\gets$ last entry on $\diagonals.\prv(L)$\;
    $c\gets \min\{z'.\cost+c_{i,j}\cdot (b-z'.\col),z_L.\cost+ c_{i,j}\cdot (b-z_L.\col)\}$\;
  }
  \If(\tcp*[f]{\small $L$ intersects right boundary of $B_{i,j}$}){$L
    \ge L_{i,j}$}{
    $a\gets b_j - \offs(L)$\tcp*[f]{\small row of intersection}\;
    $z'\gets$ last entry on $\diagonals.\nxt(L)$\;
    $c\gets \min\{c, z'.\cost+ c_{i,j}\cdot (a-z'.\row), z_L.\cost+c_{i,j}\cdot (a-z_L.\row)\}$\;
  }
  add $(a,b)$ with cost $c$ to the end of~$L$\;
\end{function*}

\begin{function*}[ht]
  \caption{trace($L$, $i$, $j$, $z_p$, $z_n$)}
  \label{alg:trace}
  \DontPrintSemicolon
  \SetKwFunction{prv}{previous}
  \SetKwFunction{nxt}{next}
  \SetKwData{col}{col}
  \SetKwData{row}{row}
  \SetKwData{diagonals}{diagonals}
  \SetKwData{offs}{offset}
  \SetKwData{cost}{cost}
  \SetKwData{and}{and}

  \KwIn{A diagonal~$L$, block indices $i,j$ such that~$L$ intersects the boundary of $B_{i,j}$,
        and entries~$z_p$ on the previous and~$z_n$ on the next diagonal of~$L$.}
  \KwOut{Cost of an optimal diagonal-conform warping path to the intersection of $L$ with a boundary of $B_{i,j}$. Recursively fills the diagonal $L$ with all intersections of $L$ with previous block boundaries.}

  $L_p\gets\diagonals.\prv(L)$\;
  $L_n\gets\diagonals.\nxt(L)$\;
  \lWhile{$z_p\ne\bot$ \and $z_p.\row > a_i$}{$z_p\gets L_p.\prv(z_p)$}
  \lWhile{$z_n\ne\bot$ \and $z_n.\col > b_j$}{$z_n\gets L_n.\prv(z_n)$}
  \If{$z_p\ne\bot$ \and $z_n\ne\bot$ \and $i,j\geq 1$}{
    $(a,b)\gets (a_i,b_j)$\;
    $c\gets\infty$\;
    \If(\tcp*[f]{\small $L$ intersects top boundary $a_i$}){$L \le L_{i,j}$}{
      $b\gets a_i + \offs(L)$\;
      $c\gets \min(c, z_p.\cost + c_{i,j} \cdot (b - z_p.\col))$\nllabel{ln:hori cost}\;
    }
    \If(\tcp*[f]{\small $L$ intersects right boundary~$b_j$}){$L \ge L_{i,j}$}{
      $a\gets b_j - \offs(L)$\;
      $c\gets \min(c, z_n.\cost+ c_{i,j} \cdot (a - z_n.\row))$\nllabel{ln:vert cost}\;
    }
    \If(\tcp*[f]{\small $L$ intersects top boundary $a_{i-1}$ of $B_{i-1,j}$}){$L \ge L_{i-1,j-1}$}{
      $c\gets\min(c, \trace(L,i-1,j, z_p, z_n) + c_{i,j} \cdot (a - a_{i-1}))$\nllabel{ln:retrace1}
    }
    \Else(\tcp*[f]{\small $L$ intersects right boundary $b_{j-1}$ of $B_{i,j-1}$}){$c\gets \min(c, \trace(L,i,j-1, z_p, z_n) + c_{i,j} \cdot (b - b_{j-1}))$\nllabel{ln:retrace2}}
    add~$(a,b)$ with cost $c$ to the end of~$L$\;
    \Return{$c$}\;
  }
  \lElse{\Return{$\infty$}}
\end{function*}

\begin{proof}
The algorithm builds an optimal diagonal-conform warping path 
``block-by-block''  via dynamic programming (iterating over blocks~$B_{i,j}$ for $i=1,\ldots,k$ and $j=1,
\ldots,\ell$) using optimal diagonal-conform warping paths to intersections of block boundaries with block diagonals (see \Cref{alg:kappa DP} for the pseudocode).
Whenever a block~$B_{i,j}$ is added, the corresponding block diagonal~$L_{i,j}$ is inserted (if it does not already exist) in a sorted doubly-linked list (\var{diagonals}) of previously encountered block diagonals.
Then, the costs of optimal diagonal-conform warping paths to all intersections of previously encountered diagonals with the boundaries of~$B_{i,j}$
are computed (using \ref{alg:append}) as well as the costs for the intersections of~$L_{i,j}$ with the boundaries of blocks~$B_{i',j'}$, $i'\le i$, $j'\le j$ (\ref{alg:trace}).
Before we prove correctness, we introduce some preliminary definitions.

In our algorithm, a diagonal $L_{i,j}\in\mathcal{L}$ (going through the upper right corner of block~$B_{i,j}$) is a sorted list of its intersections with block boundaries.
The \emph{offset} of~$L_{i,j}$ is $b_j - a_i$.
We define a linear order on diagonals as follows:
$L_{i,j}$ is ``to the left of'' $L_{i',j'}$ (denoted $L_{i,j} < L_{i',j'}$) if and only if $b_j-a_i < b_{j'}-a_{i'}$, that is, its offset is smaller.

For the correctness, we show that after a block~$B_{i,j}$ is handled,
all intersections between block boundaries and block diagonals of blocks~$B_{i',j'}$ with $i'\le i$ and~$j'\le j$
are correctly determined and stored on the corresponding diagonals (sorted with increasing row and column indices) together with the cost of an optimal diagonal-conform warping path.

To this end, consider block $B_{i,j}$ and assume that for all previous blocks~$B_{i',j'}$ with $i'<i$ or $j'<j$ the above claim holds
(this is trivially true before the first block~$B_{1,1}$ is handled).
Moreover, we assume that \var{diagonals} is sorted with increasing offset (which initially holds before~\Cref{ln:first for}, where it only contains the diagonals $-\infty$, $L_{0,0}$, and $\infty$ in that order).
Note that, by \Cref{cor:diags}, we only need to consider new intersections, that is, intersections of previous
block diagonals with the boundaries of~$B_{i,j}$ and intersections of~$L_{i,j}$ with previously handled block boundaries (if~$L_{i,j}$ does not yet exist).
For all other previously computed intersections, there exists an optimal diagonal-conform warping path which does not use~$L_{i,j}$,
hence, we do not need to update them.

As regards the intersections on the boundaries of~$B_{i,j}$, observe that a diagonal~$L$ intersects the top boundary~$a_i$ if~$L_{i,j-1} < L \le L_{i,j}$.
If this is the case, then clearly the intersection is~$(a_i,a_i+\sigma)$, where~$\sigma$ is the offset of~$L$.
Now, by definition, there are two options for a diagonal-conform warping path to reach this intersection:
either diagonally on~$L$ (from the last intersection stored on~$L$) or from the left on the boundary~$a_i$.
For the latter option, a diagonal-conform warping path has to go over the intersection of the diagonal that is directly to the left of~$L$ (that is, the predecessor of~$L$ in \var{diagonals}) with~$a_i$.
By assumption, this intersection is the last one stored on the predecessor of~$L$ in \var{diagonals}.
The optimum of these two cases can easily be determined (see minimum computation in \ref{alg:append} which is called in \Cref{ln:1st append} of \Cref{alg:kappa DP}).
The intersections on the right boundary of~$B_{i,j}$ are handled analogously in \Cref{ln:2nd append} (using the successor of~$L$ in \var{diagonals}).
Note that if there already exists a diagonal with the same offset as $L_{i,j}$, then its intersection with the boundary of~$B_{i,j}$ (which is the upper right corner of~$B_{i,j}$)
is added in \Cref{ln:3rd append}.

If $L_{i,j}$ does not yet exist, then it is newly inserted into \var{diagonals} in \Cref{ln:new diag} before the first diagonal in \var{diagonals} with a larger offset.
Hence, \var{diagonals} is correctly sorted. Then, all intersections of~$L_{i,j}$ with block boundaries are recursively added via \ref{alg:trace} in \Cref{ln:trace}.
This is done as follows:
Consider an intersection of~$L_{i,j}$ with a boundary of a block~$B_{i',j'}$, $i'\le i$, $j'\le j$.
Again, by definition, an optimal diagonal-conform warping path only has the options to reach this intersection via~$L_{i,j}$ or via the boundary.
For the boundary option, we can again use the previously computed intersections on the neighboring diagonals of~$L_{i,j}$ in \var{diagonals}.
For the diagonal option, we need to compute the preceding intersection of~$L_{i,j}$ with a previous block boundary first. This is done recursively.
Note that the previous intersection of~$L_{i,j}$ is on the top boundary of~$B_{i'-1,j'}$ if~$L_{i,j} > L_{i'-1,j'-1}$,
and it is on the right boundary of~$B_{i',j'-1}$ if~$L < L_{i'-1,j'-1}$ (note that $L_{i,j}=L_{i'-1,j'-1}$ is not possible since~$L_{i,j}$ is a new diagonal).
Moreover, this intersection can easily be determined (as described above) and an optimal diagonal-conform warping path to this intersection can again be determined using only
the neighboring diagonals of~$L_{i,j}$ in \var{diagonals}.
The recursion terminates when there exists no intersection of~$L_{i,j}$ with a previous block boundary (that is, the border of the DTW matrix~$D$ is reached).
In this case, a diagonal-conform warping path to the current intersection can only come from the corresponding boundary.
If there is no intersection on this boundary with one of the neighboring diagonals of~$L_{i,j}$, then this intersection cannot be
reached by any diagonal-conform warping path. Hence, its cost can be set to~$\infty$.
This completes the correctness of \Cref{alg:kappa DP}. 

For the running time, note that each intersection is computed exactly once (either by \ref{alg:append} or by \ref{alg:trace}).
Moreover, the computation required to handle a single intersection takes constant time.
Thus, the overall running time is linear in the total number~$\kappa$ of intersections.
\end{proof}

As regards the value of~$\kappa$, note that~$\kappa \le kn + \ell m -k\ell$ clearly holds since this is the overall number of entries on all block boundaries.
Hence, a (tight) worst-case upper bound is
\[\kappa\in O(\min(k^2\ell +k\ell^2,kn + \ell m)).\]
In practice, $\kappa$ might be smaller since not every block diagonal will intersect every boundary (depending on the specific block sizes) and some block diagonals might even be identical (for example, if square blocks appear).
Such beneficial block sizes can be achieved, for example, when using piecewise aggregate approximation~\cite{YF00,KCPM01} as preprocessing where the time series are approximated by piecewise constant series with a fixed run length.
For the case that all blocks have equal sizes, the following improved upper bound on~$\kappa$ holds.

\begin{lemma}\label{lem:kappa}
  Let $x$ and~$y$ be two time series such that~$x$ consists of~$k$ runs of length~$m'$ and~$y$ consists of~$\ell$ runs of length~$n'$, where $n' \le m'$. Then, the number~$\kappa$ of intersections between block diagonals and block boundaries is in $O(k\ell\cdot M/n')$, where~$M$ is the least common multiple of~$m'$ and~$n'$.
\end{lemma}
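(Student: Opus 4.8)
The plan is to use the extra arithmetic structure coming from all blocks having the same height $m'$ (within $x$) and the same width $n'$ (within $y$): every block corner has coordinates $(im',jn')$, so every block diagonal in $\mathcal L$ has offset $b_j-a_i=jn'-im'$, which is a multiple of $d:=\gcd(m',n')$. Writing $\mu:=m'/d$, note that $d$ divides both $m=km'$ and $n=\ell n'$, that $\mu=m'/\gcd(m',n')=M/n'$, and that the hypothesis $n'\le m'$ gives $n'/d\le\mu$. It therefore suffices to prove $\kappa=O(k\ell\mu)$.

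First I would count the block diagonals meeting a single top boundary. Row $a_i=im'$ (for $i\in[k]$) is met by the diagonal of offset $\sigma$ at column $a_i+\sigma$, and this is a genuine entry of $D$ precisely when $a_i+\sigma\in\{1,\dots,n\}$, i.e.\ $\sigma$ ranges over an interval of exactly $n$ consecutive integers. Since all offsets occurring in $\mathcal L$ are multiples of $d$ and $d\mid n$, at most $n/d$ of them lie in this interval, so at most $n/d$ diagonals from $\mathcal L$ cross row $a_i$. Symmetrically, the right boundary column $b_j=jn'$ (for $j\in[\ell]$) is crossed by at most $m/d$ diagonals from $\mathcal L$.

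Now I would sum over all $k$ top boundaries and all $\ell$ right boundaries. Every intersection of a block diagonal with a block boundary lies on at least one such row or column (a block corner on one of each), so
\[
  \kappa \;\le\; k\cdot\frac{n}{d} + \ell\cdot\frac{m}{d} + O(k+\ell)
        \;=\; k\ell\cdot\frac{n'+m'}{d} + O(k+\ell)
        \;\le\; 2k\ell\mu + O(k+\ell),
\]
where the $O(k+\ell)$ term absorbs the base entry $(0,0)$ together with the diagonal $L_{0,0}$. As $k\ell\mu$ dominates $k+\ell$, this yields $\kappa\in O(k\ell\mu)=O(k\ell\cdot M/n')$, as claimed.

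I do not anticipate a genuine obstacle here; the proof is essentially a counting argument, and the only points needing care are the bookkeeping details: verifying that the equal-block-size hypothesis really forces every offset in $\mathcal L$ to be divisible by $\gcd(m',n')$ (this is the crux of the improvement over the generic bound $|\mathcal L|\cdot(k+\ell)$), checking that distinct diagonals hit a fixed boundary at distinct entries so that the per-boundary counts genuinely sum to an upper bound on $\kappa$, and noting that block-corner intersections are only double-counted rather than missed, which is harmless for an upper bound.
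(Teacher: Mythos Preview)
Your argument is correct and in fact cleaner than the paper's. Both proofs bound the number of intersections per boundary and then sum, but the key steps differ. The paper first reduces the set of block diagonals via the periodicity $L_{i,j}=L_{i-\alpha,j-\beta}$ (with $\alpha=M/m'$, $\beta=M/n'$) to a covering $\mathcal L=\mathcal A\cup\mathcal B\cup\{L_{0,0}\}$ with $\mathcal A=\{L_{i,j}\mid i\in[\alpha],j\in[\ell]\}$ and $\mathcal B=\{L_{i,j}\mid i\in[k],j\in[\beta]\}$, and then separately bounds, for each top or right boundary, how many diagonals from $\mathcal A$ and from $\mathcal B$ can meet it, using the range constraints forced by $m'\ge n'$. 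You bypass this decomposition entirely with the single observation that every offset in $\mathcal L$ is a multiple of $d=\gcd(m',n')$, so a boundary row of length $n$ (resp.\ column of height $m$) can be hit by at most $n/d$ (resp.\ $m/d$) distinct diagonals; summing immediately gives $k\ell(n'+m')/d\le 2k\ell\,m'/d=2k\ell\,M/n'$. Your route is shorter and uses only the gcd, while the paper's decomposition makes the periodicity in $\mathcal L$ more explicit, which could be useful if one wanted finer information about which diagonals actually occur. The extra $O(k+\ell)$ term you add for $L_{0,0}$ is unnecessary (its offset $0$ is already a multiple of $d$ and is covered by your count), but it does no harm.
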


\begin{proof}
  Let~$m=km'$ be the length of~$x$ and $n=\ell n'$ be the length of~$y$.
  Let $M$ be the least common multiple of~$m'$ and~$n'$ and let~$\alpha = M/m'$ and~$\beta = M/n'$.
  Clearly, for every~$\alpha < i \le k$ and~$\beta < j \le \ell$, the block diagonal~$L_{i,j}$ is the same diagonal as~$L_{i-\alpha,j-\beta}$. Thus, the set $\mathcal L$ of block diagonals can be written as

  \[\mathcal L = \mathcal A \cup \mathcal B \cup \{L_{0,0}\},\]
  where $\mathcal A=\{L_{i,j}\mid i\in[\alpha], j\in[\ell]\}$ and~$\mathcal B=\{L_{i,j}\mid i\in[k], j\in[\beta]\}$.

  Let us consider the intersections of boundary~$a_i$ with a diagonal $L_{i',j}\in \mathcal L$.
  There are two cases: For~$i < i'$, there exists an intersection if~$b_j -(i'-i)m' \ge 1$. For~$i \ge i'$, there exists an intersection if~$b_j + (i-i')m' \le n$.
  Since~$m' \ge n'$, boundary~$a_i$ can thus only have intersections with diagonals~$L_{i',j}$ where~$i-\ell\le i' \le i+\ell$. Hence, there are at most~$2\ell\cdot\beta$ intersections with diagonals in~$\mathcal B$ and at most~$\alpha\cdot\ell$ intersections with diagonals in~$\mathcal A$ on~$a_i$. Overall, there are at most~$k\ell(2\beta + \alpha) \le k\ell\cdot 3\beta$ intersections on all top boundaries.
  
  Analogously, for boundary~$b_j$, there exists an intersection with~$L_{i,j'}\in\mathcal L$ if~$a_i-(j'-j)n' \ge 1$ (for~$j' > j$) or if~$a_i+(j-j')n' \le m$ (for $j \ge j'$).
  Thus, there are at most~$\beta \cdot k$ intersections with diagonals in~$\mathcal B$ and at most~$\alpha\cdot m/n'$ intersections with diagonals in~$\mathcal A$ on~$b_j$. This yields at most~$k\ell(\beta + \alpha\cdot m'/n')=k\ell\cdot 2\beta$ intersections on all right boundaries.
  Thus, altogether there are at most~$O(k\ell\cdot M/n')$ many intersections.
\end{proof}
\noindent
Note that if $M\in O(n')$ holds in \Cref{lem:kappa} (for example, if $m'=\alpha n'$ for a constant integer~$\alpha\ge 1$), then this implies~$\kappa\in O(k\ell)$. Hence, we obtain the following.

\begin{corollary}\label{cor:kl-time}
  Let $x$ and~$y$ be two time series such that~$x$ consists of~$k$ runs of length~$m'$ and~$y$ consists of~$\ell$ runs of length~$n'\le m'$. If the least common multiple of~$m'$ and~$n'$ is in~$O(n')$, then the DTW distance between~$x$ and~$y$ can be computed from~$\tilde{x}$ and~$\tilde{y}$ in~$O(k\ell)$ time.
\end{corollary}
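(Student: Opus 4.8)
The plan is to obtain \Cref{cor:kl-time} as a direct specialisation of \Cref{lem:kappa} and \Cref{thm:kappa}. First I would verify that the hypotheses of \Cref{lem:kappa} are met: $x$ consists of $k$ runs of a common length $m'$, $y$ consists of $\ell$ runs of a common length $n'$, and $n'\le m'$ — all of which are assumed in the statement of the corollary. Applying \Cref{lem:kappa} therefore gives $\kappa \in O(k\ell\cdot M/n')$, where $M=\operatorname{lcm}(m',n')$ and $\kappa$ is the number of intersections between block boundaries and block diagonals in the DTW matrix of $x$ and $y$.

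Second, I would invoke the extra hypothesis $M\in O(n')$, that is, there is an absolute constant $c$ (independent of the instance) with $M\le c\,n'$; for instance, when $m'=\alpha n'$ for an integer constant $\alpha\ge 1$ one has $M=\alpha n'$ and hence $c=\alpha$. This immediately yields $M/n'\in O(1)$ and therefore $\kappa\in O(k\ell)$. Finally, \Cref{thm:kappa} states that \Cref{alg:kappa DP} computes $\dtw(x,y)$ from $\tilde{x}$ and $\tilde{y}$ in $O(\kappa)$ time; substituting the bound $\kappa\in O(k\ell)$ gives the claimed $O(k\ell)$ running time.

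I expect no real obstacle here: the corollary is a purely routine composition of the two preceding results, with no new combinatorial content. The only point meriting a sentence of care is the reading of ``$M\in O(n')$'': since $M$ and $n'$ are concrete integers attached to an instance, this must be interpreted as a uniform bound over the family of instances under consideration, which is precisely what is needed to absorb the factor $M/n'$ into the $O(\cdot)$.
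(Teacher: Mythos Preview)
Your proposal is correct and follows exactly the paper's own argument: combine \Cref{lem:kappa} to get $\kappa\in O(k\ell\cdot M/n')$, use the hypothesis $M\in O(n')$ to conclude $\kappa\in O(k\ell)$, and then apply \Cref{thm:kappa}. The paper even gives the same illustrative example $m'=\alpha n'$ for a constant integer~$\alpha$, so there is nothing to add.
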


If~$m'=n'$ (that is, all blocks are squares), then there are $\kappa=k\ell$ intersections which are exactly the upper right block corners.
In this special case the following holds:
If an optimal warping path moves through a block~$B_{i,j}$, then it takes exactly~$m'$ steps through~$B_{i,j}$ without loss of generality.
The algorithm Blocked\_DTW\_UB~\cite[Algorithm~1]{SDHDKJC18} (and accordingly also Coarse-DTW~\cite[Algorithm~2]{DM16} with $\phi_{\text{max}}$) uses the value~$\max(m',n')\cdot c_{i,j}$ (which clearly equals~$m'\cdot c_{i,j}$) for the cost of crossing block~$B_{i,j}$.
Hence, these algorithms are equivalent to our algorithm in this case.
That is, we proved the following.

\begin{corollary}
{\normalfont Blocked DTW}~\cite{SDHDKJC18} and {\normalfont Coarse-DTW}~\cite{DM16} are exact if all blocks are squares.
\end{corollary}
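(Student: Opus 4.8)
The plan is to reduce the statement to comparing two dynamic programs over the $k\times\ell$ grid of blocks. First I would note that ``all blocks are squares'' is precisely the case isolated just before the corollary: all runs of $x$ have a common length $m'$, all runs of $y$ a common length $n'$, and $m'=n'=:s$. Then every block diagonal $L_{i,j}$ has offset $b_j-a_i=(j-i)s$, so $\mathcal{L}$ consists of the $k+\ell-1$ distinct diagonals of offset a multiple of $s$, and (as noted above) there are $\kappa=k\ell$ intersections of block diagonals with block boundaries, namely exactly the upper right block corners $(a_i,b_j)$. By \Cref{thm:kappa} our algorithm computes $D[a_k,b_\ell]=\dtw(x,y)^2$ exactly, so it suffices to show that Blocked DTW~\cite{SDHDKJC18} and Coarse-DTW with $\phi_{\max}$~\cite{DM16} output the same value at $(a_k,b_\ell)$; I would do this by showing that both satisfy the same recurrence and base cases over the block grid.

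Second, I would spell out the recurrence underlying Blocked DTW -- the standard DTW recurrence at block granularity, where crossing a block $B_{i,j}$ costs $\max(m_i,n_j)\cdot c_{i,j}$:
\[
  B[i,j] \;=\; \max(m',n')\cdot c_{i,j} \;+\; \min\{\,B[i-1,j-1],\,B[i-1,j],\,B[i,j-1]\,\},
\]
with $B[0,0]=0$ and $B[i,0]=B[0,j]=\infty$ for $i,j\ge1$, and here $\max(m',n')=s$. The heart of the proof is to establish that the values $D[a_i,b_j]$ computed by our algorithm satisfy the identical recurrence. I would argue this via \Cref{lem:diagonals} and \Cref{obs:block}: there is an optimal diagonal-conform warping path, and for $i,j\ge1$ such a path enters the final block $B_{i,j}$ in exactly one of three ways -- diagonally along the block diagonal through $(a_i,b_j)$, horizontally along the top boundary $a_i$, or vertically along the right boundary $b_j$. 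Because the blocks are squares, these three entries originate, respectively, at the corners $(a_{i-1},b_{j-1})$ (the block diagonal through $(a_i,b_j)$ also passes through it, both having offset $(j-i)s$), $(a_i,b_{j-1})$, and $(a_{i-1},b_j)$, and each crossing of $B_{i,j}$ costs $s\cdot c_{i,j}$ -- diagonally $s$ steps, horizontally $n'=s$ steps, vertically $m'=s$ steps (recall the observation that an optimal path crosses a square block in exactly $s$ steps). Combining this with the optimality of the stored values at the three predecessor corners, and with the DTW boundary conditions for the base cases, gives $D[a_i,b_j]=s\,c_{i,j}+\min\{D[a_{i-1},b_{j-1}],\,D[a_{i-1},b_j],\,D[a_i,b_{j-1}]\}$. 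A one-line induction on $(i,j)$ then yields $B[i,j]=D[a_i,b_j]$ for all $i,j$, hence $B[k,\ell]=D[a_k,b_\ell]=D[m,n]=\dtw(x,y)^2$, which is the claim.

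I expect the one genuinely delicate step to be the ``three entries'' claim: showing that an optimal diagonal-conform path reaches the corner $(a_i,b_j)$ from one of the three neighbouring corners with increment exactly $s\,c_{i,j}$. The subtlety is that a diagonal subpath could a priori meet a boundary strictly between two block corners, which would break the clean corner-to-corner bookkeeping; but in the square case every block diagonal has offset a multiple of $s$, so it crosses each row $a_{i'}$ and each column $b_{j'}$ at a block corner. Together with the alternating diagonal/axis-parallel structure of \Cref{obs:block}, this forces the last subpath before $(a_i,b_j)$ to originate at an adjacent corner and makes the cost contribution collapse to the single term $s\,c_{i,j}$. Everything else -- the offset computation, the base cases, and the induction -- is routine.
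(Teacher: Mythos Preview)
Your proposal is correct and follows essentially the same approach as the paper: both argue that when $m'=n'=s$ the only intersections are the $k\ell$ upper right block corners, that an optimal (diagonal-conform) path crosses each visited block in exactly $s$ steps at cost $s\cdot c_{i,j}=\max(m',n')\cdot c_{i,j}$, and hence the corner-to-corner dynamic program coincides with Blocked DTW / Coarse-DTW with $\phi_{\max}$. The paper gives only a two-sentence justification of this equivalence in the paragraph preceding the corollary, whereas you spell out the recurrence $D[a_i,b_j]=s\,c_{i,j}+\min\{D[a_{i-1},b_{j-1}],D[a_{i-1},b_j],D[a_i,b_{j-1}]\}$ and verify the ``three predecessors'' claim explicitly via the offset-divisibility observation; this added rigor is welcome but does not change the argument's substance.
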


We close with remarking that, in practice, the computation of intersections can be done only once if the block sizes are identical for all pairs of time series in a data set.

\section{Experiments}

We conducted experiments to empirically evaluate our algorithm comparing it to alternatives.

\paragraph*{Data.}
We considered all seven datasets from the UCR repository \cite{UCRArchive} whose time series have a length of at least $n \ge 1000$ (time series within the same dataset have identical length). \Cref{tab:ucr} lists the selected datasets and their characteristics.

\begin{table}
  \centering
  \caption{Characteristics of the datasets we used in our experiments. Type refers to the problem domain, size to the overall number of time series in the dataset, and length to the number of elements of a time series.}
\label{tab:ucr}
\begin{tabular}{lcrr}
\toprule
dataset & type & size & length\\
\midrule
HandOutlines & IMAGE & 1370 & 2709\\
InlineSkate & MOTION & 650 & 1882\\
CinCECGtorso & ECG & 1420 & 1639\\
Haptics & MOTION & 463 & 1092\\
Mallat & SIMULATED & 2400 & 1024\\
StarLightCurves & SENSOR & 9236 & 1024\\
Phoneme & SOUND & 2110 & 1024\\
\bottomrule
\end{tabular}
\end{table}

\paragraph*{Setup.}
We compared our run-length encoded DTW algorithm (RLEDTW) with the following alternatives\footnote{C++ implementations are available at \url{www.akt.tu-berlin.de/menue/software/}.} (see \Cref{tab:algs} for descriptions):
\begin{itemize}
\item DTW (standard $O(n^2)$-time dynamic program)~\cite{SC78},
\item AWarp~\cite{MCAHM16},
\item SDTW~\cite{HG17},
\item BDTW~\cite{SDHDKJC18,DM16}.
\end{itemize}

To compare the algorithms, we applied the following procedure: 
From each of the seven UCR datasets, we randomly sampled a subset $\mathcal{D}$ of $100$ time series (of length~$n$). Then, for a specified encoding length $k<n$, we transformed the subset $\mathcal{D}$ into a subset $\mathcal{D}^k$ by compressing the time series to consist of $k$ runs.
The compression is achieved by computing a best piecewise constant approximation with~$k$ constant segments minimizing the squared error (also called adaptive piecewise constant approximation). This can be done using dynamic programming~\cite{FJMM97,CKMP02,JFS19}. 

The encoding length $k$ was controlled by the space-saving ratio $\rho = 1 - k/n$. We used the space-saving ratios $\rho \in \{0.1, 0.5, 0.75, 0.9, 0.925, 0.95, 0.975, 0.99\}$. Thus, we generated eight compressed versions of each subset $\mathcal{D}$ in run-length encoded form. For every compressed dataset, we computed all pairwise DTW distances using the five different algorithms.

\paragraph*{Results.}

\begin{figure}[t]
\centering
\includegraphics[width=\textwidth]{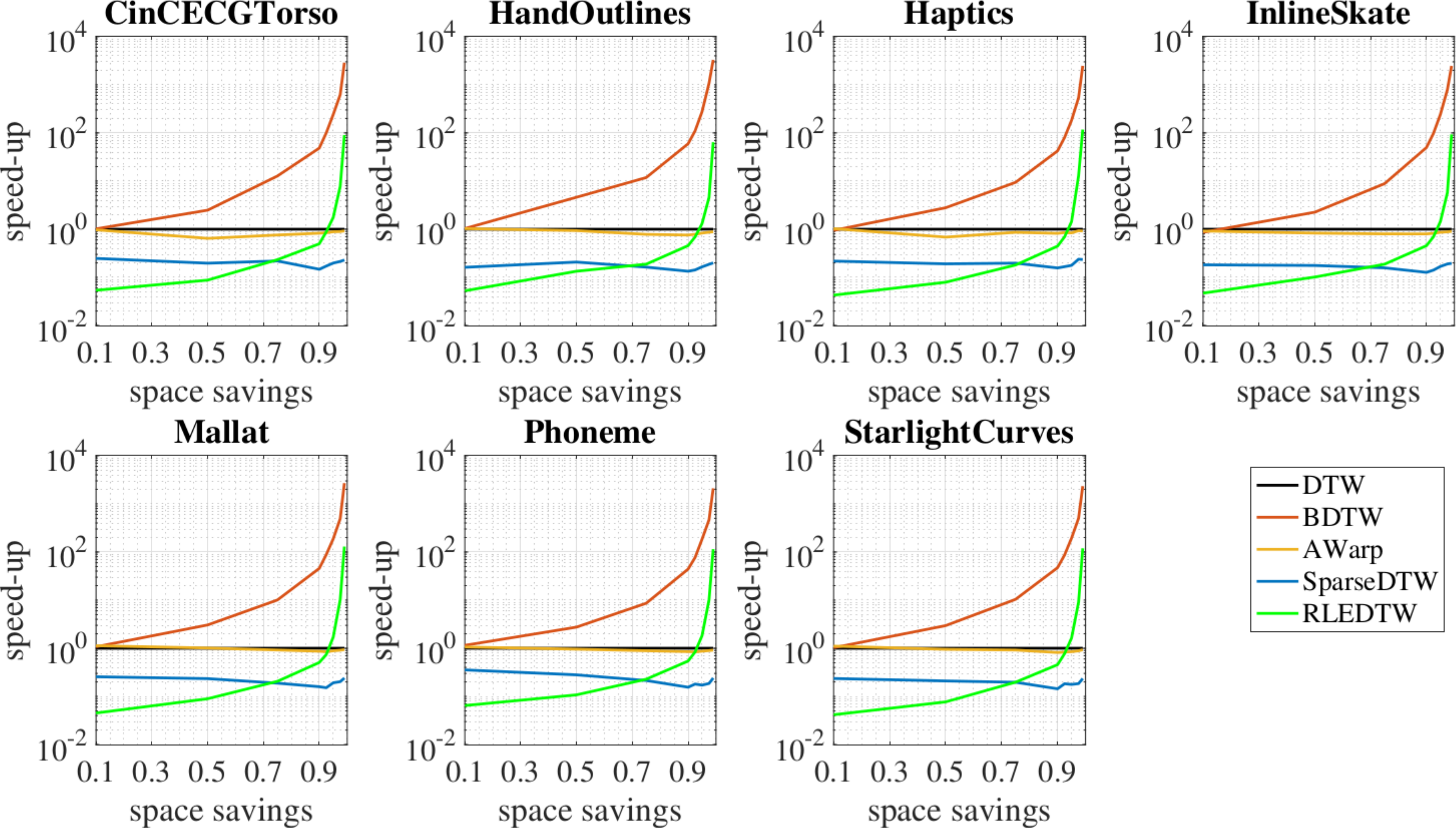} 
\caption{Average speedup factor as a function of the space-saving ratio.}
\label{fig:time}
\end{figure}

\Cref{fig:time} shows the average speedup factors of the algorithms compared to the DTW baseline as a log-function of the space-saving ratio $\rho$. The speedup of an algorithm~\textsc{A} for computing a DTW distance between two time series is defined by $\sigma_{\textsc{A}} = t_{\text{DTW}}/t_{\textsc{A}}$, where~$t_{\textsc{A}}$ is the computation time of \textsc{A} and $t_{\text{DTW}}$ is the computation time of the standard dynamic program. That is, for $\sigma_{\textsc{A}} > 1$ ($\sigma_{\textsc{A}} < 1$), algorithm~\textsc{A} is faster (slower) than the baseline method. 

The results show that the speedup factors of AWarp and SDTW are independent of the space-saving ratio and less than one. Hence, both algorithms are actually slower than standard dynamic programming. This is due to the fact that both algorithms have been designed for time series with runs of zeros. The results indicate that AWarp and SDTW are of limited use for the general case of time series having only few runs of zeros.
In contrast, the speedup factors of the BDTW heuristic and our exact RLEDTW grow superexponentially with increasing space-saving ratio. For all but the smallest space-saving ratios, BDTW is faster than all other algorithms. In the best case, BDTW is up to more than~$1000$ times faster than DTW.
Our algorithm is the slowest for all but the highest space-saving ratios. At the lowest space-saving ratios, RLEDTW is nearly $100$ times slower than DTW.
This is caused by the overhead of computing the intersections. In fact, the number~$\kappa$ of intersections always attained the upper bound of~$2kn -k^2$ for~$k \ge 0.1n$ (that is, $\rho \le 0.9$). Hence, the simple $O(kn)$-time dynamic program (mentioned in \Cref{sec:algo}) might be faster here.
For $k < 0.075n$ ($\rho > 0.925$), RLEDTW is the fastest exact algorithm and up to $100$ times faster than DTW.



While all other algorithms returned exact solutions (AWarp yields exact solutions if there are no runs of zeros), the speedup of BDTW is at the expense of solution quality. \Cref{fig:err} shows the average absolute error percentage of the lower and upper bound of BDTW as a log-function of the space-saving ratio. The absolute error percentage of an approximated DTW distance $d(x,y)$ between two time series $x$ and $y$ is defined by 
\[
E = 100\cdot\frac{|\dtw(x,y)-d(x,y)|}{\dtw(x,y)}.
\]
The general trend is that BDTW becomes increasingly inaccurate with increasing space-saving ratio with error percentages by more than $10 \%$ on average. In addition, the upper bound better approximates the DTW distance than the lower bound for all but the highest space-saving ratios.

\begin{figure}[t]
\centering
\includegraphics[width=\textwidth]{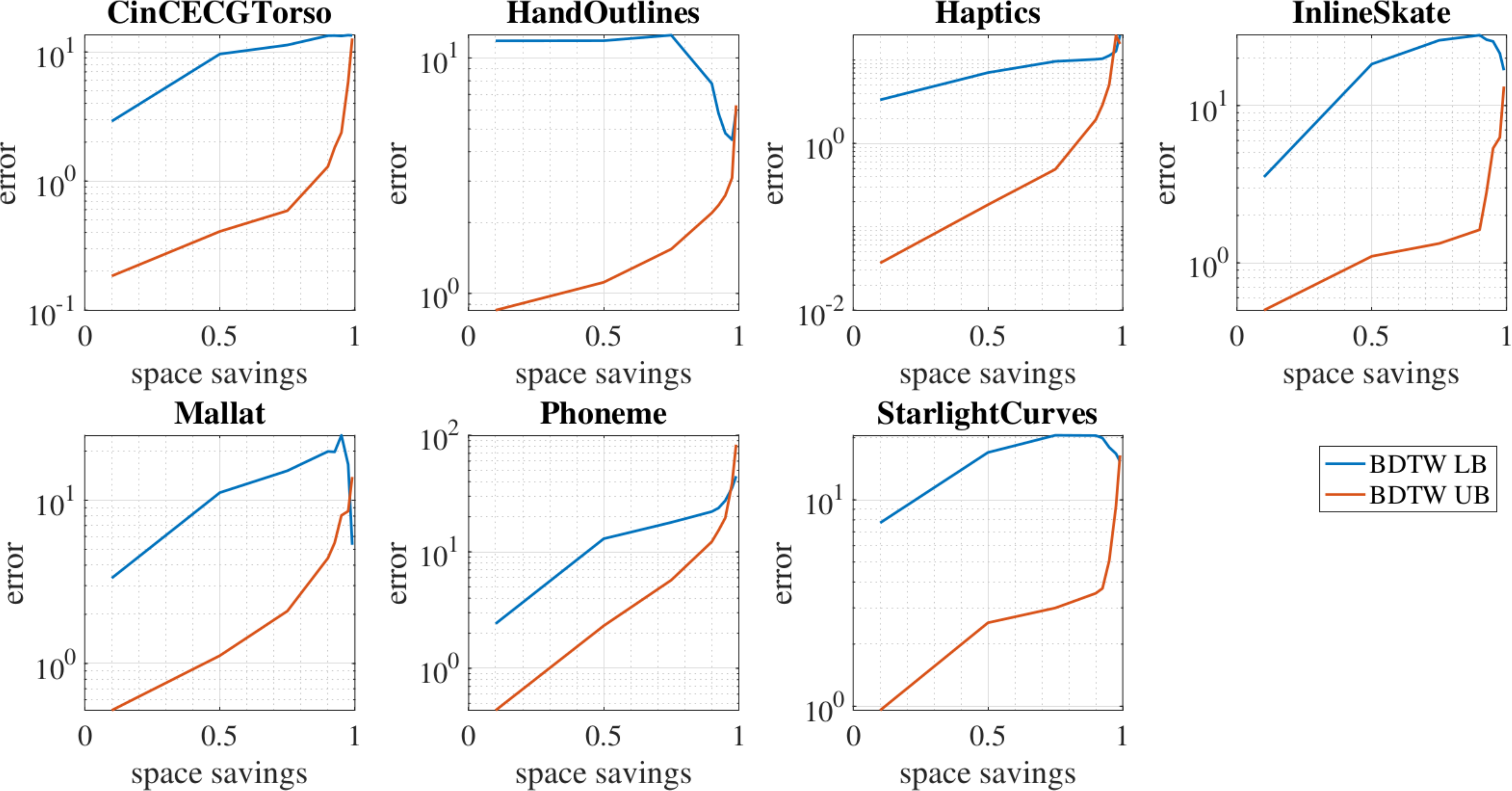} 
\caption{Average error percentage of the BDTW bounds as a function of the space-saving ratio.}
\label{fig:err}
\end{figure}

\section{Conclusion}
We developed an asymptotically fast algorithm to compute exact DTW distances between run-length encoded time series.
The running time is cubic in the maximum coding lengths of the inputs.
This is actually the first exact algorithm whose running time only depends on the input coding lengths.
Experiments indicate that our method yields improved performance for time series with short coding lengths (which could be achieved, for example, when using preprocessings such as piecewise aggregate approximation~\cite{FJMM97,YF00,KCPM01,CKMP02}).

An immediate question is whether there exists an~$O(\max(k,\ell)^{3-\epsilon})$-time algorithm for any~$\epsilon>0$
or whether we can exclude such an algorithm assuming the SETH.
Finally, studying the complexity of DTW with respect to other compressions (as has been done for other string problems~\cite{ABBK17}) might lead to interesting results.

\bibliography{ref}

\end{document}